\numberwithin{equation}{section}
\newtheorem{theorem}{Theorem}[section]
\newtheorem{proposition}[theorem]{Proposition}
\theoremstyle{definition}
\newtheorem{definition}[theorem]{Definition}
\newcommand{\R}{\mathbb{R}}
\newcommand{\g}{\mathfrak{g}}
\newcommand{\so}[1]{\mathfrak{so}(#1)}
\newcommand{\M}{\mathcal{M}}
\newcommand{\T}{\mathrm{T}}
\newcommand{\Ad}{\operatorname{Ad}}
\newcommand{\inn}{\operatorname{\lrcorner}}
\newcommand{\sslash}{\mathbin{/\mkern-6mu/}}
\newcommand{\pder}[2]{\frac{\partial #1}{\partial #2}}
\title{Generalized Uncertainty Principle theory with a single constraint}
\author[1,2]{Matteo Bruno\footnote{\texttt{matteo.bruno@uniroma1.it }}}
\author[1]{Sebastiano Segreto\footnote{\texttt{sebastiano.segreto@uniroma1.it }}}
\affil[1]{\textit{\normalsize{Physics Department, Sapienza University of Rome, P.za Aldo Moro 5, Rome, 00185, , Italy}}}
\affil[2]{\textit{\normalsize{INFN, Sezione di Roma 1, P.le Aldo Moro 2, 00185, Rome, Italy}}}
\date{}
\begin{document}

\maketitle

\begin{abstract}
We aim to analyze the consistency of the deformation of the Heisenberg algebra in the setting of constrained Hamiltonian systems, providing a procedure to induce the deformation on the Poisson algebra after symplectic reduction. We investigate this in the context of the classical interpretation of Generalized Uncertainty Principle theories, treating two cases separately. For the first case, we consider a group action on the phase space together with a set of first-class constraints that can be interpreted as a momentum map. We furnish an explicit example in the case of rotational invariant deformed algebras. In the second case, we consider a single constraint provided by the Hamiltonian, which is a common instance in General Relativity, with straightforward application in cosmology.
\end{abstract}

\section{Introduction}

Generalized Uncertainty Principle (GUP) theories belong to the category of effective theories that provide an alternative description of the structure of space (or of configuration space, more generally), exhibiting features compatible with the quantum structure suggested by a plethora of quantum gravity theories \cite{konishi1990minimum, maggiore1993generalized}. In this sense, they define a framework that can be readily employed to test the consequences of presumed quantum gravity effects on the dynamics of the system of interest \cite{ali2009discreteness, das2009phenomenological}. Their quantum formulation is based on the deformation of the ordinary Heisenberg algebra between quantum conjugate operators \cite{Kempf:1993bq, Kempf:1994su}. Indeed, this modification is the seed for introducing nonlocal effects into the theory, resulting — according to the specific algebra — in the emergence of a minimal length and the appearance of non-commutativity between configuration operators \cite{segreto2023extended, Segreto:2024vtu}. In other words, the foundation of these theories is the proper deformation of the standard canonical structure of quantum mechanics, with all the consequences such a modification implies. \\
The nature of the relation between the GUP algebra and the alteration of the canonical structure on which standard physics is based is more evident in a classical formulation of the framework.\\
Indeed, the deformed quantum commutators can be inherited — at least formally — by a set of Poisson brackets (PBs) describing a deformed classical mechanics. The structure of these PBs is completely described by a 2-form $\omega$, where all the information of this new classical theory is encoded \cite{Lee_2014}. As we were able to show extensively and rigorously in one of our previous works \cite{Bruno_Segreto_2024}, this 2-form — under a precise set of conditions regarding the functions that control the deformation of the PBs — is a symplectic form that arises as a deformation of the ordinary one. This fact is exactly what enables a classical Hamiltonian formulation of GUP theories, an aspect that is extremely relevant given the semi-classical interpretation these frameworks naturally carry to study quantum gravity effects in a transition regime (see e.g. \cite{segreto2025dynamics} for the primordial cosmological scenario).\\
This paper seeks to extend the analysis developed in~\cite{Bruno_Segreto_2024} by incorporating well-established constructions in symplectic geometry, such as induced structures on submanifolds and the symplectic quotient. Broadly speaking, these operations are required whenever one studies a physical system that exhibits gauge symmetries, and therefore is subject to constraints. Such situations are ubiquitous in physics—most notably in the context of gauge fields and gravity—and thus provide highly general and physically relevant scenarios. The aim of the present work is to offer a systematic prescription for treating GUP theories, in their classical formulation, when constraints of this type come into play.
\smallskip
In the first part of the paper (Sec.~\ref{Sec_II}), we address the constraints by employing the general framework of symplectic reduction~\cite{marsden1974reduction}. In this setting, the phase space is assumed to carry an action of a Lie group, interpreted as the gauge symmetry group. We consider the cases in which the gauge group is $SO(2)$ or $SO(3)$, and subsequently restrict our analysis to rotationally invariant GUP-deformed algebras~\cite{Bruno_Segreto_2024, Maggiore:1993kv}.
Identifying the momentum map $\mu$ with the generator of rotations, we explicitly compute the symplectic structure, in both cases, on the reduced phase space $\mathcal{M} \sslash SO(n)$. We further characterize its topology by observing that the constraint surface $\mu^{-1}(0)$ possesses an invariant vector bundle structure. As expected, the symplectic reduction procedure preserves the geometric properties of the original phase space while eliminating redundant degrees of freedom, as evidenced by the form of the reduced symplectic structure, which retains the deformed features of the underlying model.
\smallskip
The second part of the paper (Sec.~\ref{Sec_III}) is devoted to the case in which one deals with a single constraint, represented by the Hamiltonian itself~\cite{Henneaux:1992ig}. In this situation, no Lie-group action is available on which the standard symplectic reduction procedure can be constructed. To treat this case, we introduce a general prescription that is closer in spirit to well-known gauge-fixing techniques: we define an ad hoc vector field $T$ on the phase space, which plays the role of an external time parameter and is required to properly define a dynamics on a suitable submanifold of the full phase space. We identify a condition on both the symplectic structure and the choice of $T$ that guarantees the internal consistency of the proposed method.\\
This second scenario corresponds precisely to Hamiltonian cosmology, where the constrained nature of General Relativity is encoded in the Hamiltonian constraint $H = 0$~\cite{Arnowitt:1962hi, Misner:1969ae}. Given the relevance of the reduction procedure in this physical setting, we explicitly analyze the example provided by the Hamiltonian formulation of Bianchi models~\cite{ellis2006bianchi}, using the Misner variables~\cite{Misner:1969ae}. Following the general prescription introduced above, we are able to determine explicitly the reduced phase space, the reduced Hamiltonian, and the reduced symplectic form. Once again, the resulting symplectic structure displays the same functional form as the original one, thereby preserving the geometry of the phase space throughout the reduction procedure.\\
Nevertheless, as we show explicitly, the success of this construction requires certain conditions to be imposed on the symplectic form of the original phase space. In particular, with respect to the Poisson-bracket structure, no noncommutativity may occur between the variable denoted by $q_0$ and the remaining variables $q_i$. Since, in the reduction procedure, $q_0$ is ultimately selected as the clock variable governing the dynamics on the submanifold, this indicates that time variables cannot be noncommutative with spatial variables—at least within the framework of our construction. This observation may be related to the loss of unitarity observed in quantum theories that allow such forms of noncommutativity~\cite{balachandran2004unitary}.\\\
Finally, the procedure we outline provides a solid prescription to correctly treat the reduction procedure in cosmology in this deformed sector. Since the structure of the reduced symplectic form is the same as that of the original one, only with respect to the newly selected degrees of freedom, the present study represents a complete validation of the naiver approach according to which the deformed symplectic form is directly imposed on the reduced phase space, showing thus complete compatibility of the dynamical results obtained through the two different methods.

\section{Deformed algebra after symplectic reduction} \label{Sec_II}
The class of GUP theories of which we aim to discuss the implementations of the constraints in its classical interpretation is the one described by the following commutation relations:
  \begin{align}
  \label{GUP}
    \nonumber
    &[\hat p_i,\hat p_j]=0,\\
    &[\hat q_i,\hat q_j]=i\hbar L_{ij}(\hat q,\hat p),\\ \nonumber    
    &[\hat q_i,\hat p_j]=i\hbar\delta_{ij} f(\hat q,\hat p).
  \end{align}
  
 We recall that by \emph{classical interpretation} of a \emph{d}-dimensional GUP theory we intend the following \cite{Bruno_Segreto_2024}:
 \begin{definition}
 \label{class}
    The classical interpretation of a GUP theory in the class given above consists of interpreting the set $(q_1,\dots,q_d,p_1,\dots,p_d)$ as coordinates on a \emph{2d}-dimensional smooth manifold $\M$ equipped with a symplectic form $\omega$ such that it induces the following fundamental Poisson brackets
    \begin{align}
    \label{Poisson}
    \nonumber
    &\{ p_i, p_j\}=0,\\
    &\{ q_i, q_j\}= L_{ij}(q,p),\\ \nonumber  
    &\{ q_i,p_j\}=f(q,p)\delta_{ij}.
    \end{align}
    
    We refer to the symplectic manifold $(\M,\omega)$ as phase space.
 \end{definition}
  From Definition \ref{class}, we can compute the matrix associated with the symplectic form that induces the Poisson brackets \eqref{Poisson} in the coordinate system $x^a=(q_1,\dots,q_d,p_1,\dots,p_d)$. Indeed, the inverse of the symplectic form matrix is easily computable:
 \begin{equation}
 \label{PoissonMat}
   \omega^{ab}=\{x^a,x^b\}=
   \begin{pmatrix}
    L & f\mathrm{id}_d\\
    -f\mathrm{id}_d & 0
   \end{pmatrix},
 \end{equation}
 The inverse of this matrix is immediate and so we obtain the matrix of the symplectic form
 \begin{equation}
 \label{Sp-form}
  \omega_{ab}=
  \begin{pmatrix}
    0 & -\frac{1}{f}\mathrm{id}_d\\
    \frac{1}{f}\mathrm{id}_d & \frac{1}{f^2}L
  \end{pmatrix},
 \end{equation}
 where clearly $L:=\{L_{ij}\}$ is a skew-symmetric submatrix and $\mathrm{id}_d$ is the identity matrix in $d$ dimensions.
 
\smallskip

Given a symplectic action of a Lie group $G$ on a symplectic manifold $\M$ with symplectic form $\omega$, it may exist a momentum map for this action. Namely, a map $\mu:\M\to \g^*$ such that
\begin{enumerate}
    \item $d\mu(\xi)=\varrho(\xi)\inn\omega$,
    \item $\mu\circ g=\Ad^*_g\mu$.
\end{enumerate}
If it exists, we say that the action is Hamiltonian. Here, $\varrho$ is the Lie algebra homomorphism $\varrho:\g\to\mathfrak{X}(\M)$ induced by the action.\\
If $G$ acts freely and properly on $\mu^{-1}(0)$, then $0$ is a regular value for $\mu$ and the quotient $\mu^{-1}(0)/G$ is a smooth manifold. The quotient inherits a symplectic form from $\M$; that is, there is a unique symplectic form on the quotient space whose pullback to $\mu ^{-1}(0)$ equals the restriction of $\omega$ to $\mu ^{-1}(0)$. Thus, the quotient is a symplectic manifold, called the symplectic reduction of $\M$ by $G$ and is denoted $\M\sslash G$ \cite{Marsden_Weinstein_1974}.

\medskip

In classical interpretation of Generalized Uncertainty Principle theories the deformation of the algebra is encoded in interpreting the physical position $q$ and momentum $p$ as coordinate on a symplectic manifold that are not a Darboux frame \cite{Bruno_Segreto_2024}. However, since a set of Darboux coordinate always exists on a symplectic manifold, we need to characterize the coordinate system in which to evaluate the symplectic form on the symplectic reduction starting from the initial coordinate frame.
\begin{definition}
    A coordinate system $(x_1,\dots,x_m)$ on $\M\sslash G$ is called natural with respect to the classical interpretation of a GUP theory if it is possible to induce coordinates on $\mu^{-1}(0)$ of the form $(\theta_1,\dots,\theta_l)$ such that $(x_1,\dots,x_m)=\pi(\theta_1,\dots,\theta_l)$, where $\pi$ is the projection $\mu^{-1}(0)\to\mu^{-1}(0)/G=\M\sslash G$, and providing the parametric description $\left(q_1(\theta),\dots,q_n(\theta),p_1(\theta),\dots,p_n(\theta)\right)$ of $\mu^{-1}(0)$ in $\M$, with the additional condition that, if $(q_i,p_i)$ forms a Darboux frame when $f(p)=1$ and $L_{ij}=0$, then $(x_1,\dots,x_m)$ will be a Darboux frame for the induced symplectic form.
\end{definition}

\subsection{Rotational-invariant deformed algebra in two dimensions}
\label{Sec:2-d}
We can start from a simple toy model: we consider a two dimensional system with a rotational symmetry and a constraint given by the angular momentum. 
To satisfy these requests, we need to deal with GUP Poisson brackets with the following structure:
\begin{equation} \label{maggiore_algebra}
    \begin{split}
        &\{p_i,p_j\}=0, \\
    &\{q_i,q_j\}= a(p)J,\\  &\{q_i,p_j\}= f(p)\delta_{ij},
    \end{split}
\end{equation}
where $J$ is the generator of the rotations, i.e. $\{J,p_i\}=\epsilon_{ij}p_j$ and $\{J,q_i\}=\epsilon_{ij}q_j$, and the equation $J=0$ set the constraint.\\
This is exactly a two-dimensional version of the so-called Maggiore algebra with constraint $J=0$ \cite{Maggiore:1993kv}. \\
Explicitly we can write $J$ as follows \cite{Bruno_Segreto_2024}:
\begin{equation}
    J=\frac{1}{f(p)}(q_1p_2-q_2p_1).
\end{equation}
It is known that, in order the algebra \eqref{maggiore_algebra} to be rotational invariant, $a$ and $f$ must depend on the modulus of $p$ only. To carry out our computations, we consider as phase space $\M=\T^*(\R^2\setminus\{0\})\cong (\R^2\setminus\{0\})\times \R^2$ with a symplectic form in the coordinate system $(q_1,q_2,p_1,p_2)$ given by
\begin{equation}
    \omega_{ab}=\frac{1}{f}\begin{pmatrix}
        0 & 0 & -1 & 0\\
        0 & 0 & 0 & -1\\
        1 & 0 &  0 & \frac{a}{f}J \\
        0 & 1 & -\frac{a}{f}J & 0
    \end{pmatrix},
\end{equation}
where the functions $a$ and $f$ are linked by the following differential equation \cite{Bruno_Segreto_2024}:
\begin{equation}
    f\frac{\partial f}{\partial p_i}=-ap_i.
\end{equation}
For the sake of regularity we have removed the point $0\in \R^2$ from the configuration space. Indeed, intuitively, we have in mind a rotating particle, which would not be affected by rotation if placed at the origin of the configuration space.\\
Now, the action of $SO(2)$ on this space is the usual one, and it is represented by the matrix
\begin{equation*}
    \begin{pmatrix}
        \cos\theta & \sin\theta & 0 & 0\\
        -\sin\theta & \cos\theta & 0 & 0\\
        0 & 0 & \cos\theta & \sin\theta\\
        0 & 0 & -\sin\theta & \cos \theta
    \end{pmatrix}
\end{equation*}
The generator of rotations is the candidate to be the momentum map for this action: the coadjoint action is trivial because $SO(2)$ is Abelian, indeed $J$ is invariant under rotations. We can check that this action is Hamiltonian. \\ 
The Lie algebra $\so{2}$ is a real one-dimensional Abelian Lie algebra, hence, it is isomorphic to $\R$. The generator of this algebra is the matrix $\begin{pmatrix}
        0 & 1 \\
        -1 & 0 
    \end{pmatrix}$ and the corresponding vector field is
\begin{equation}
    \varrho\left(\begin{pmatrix}
        0 & 1 \\
        -1 & 0 
    \end{pmatrix}\right)=q_2\pder{}{q_1}-q_1\pder{}{q_2}+p_2\pder{}{p_1}-p_1\pder{}{p_2}.\
\end{equation}
From a simple computation in coordinate, we can check that $\xi\inn\omega$ is equal to
\begin{equation}
    dJ=\frac{1}{f}\left(p_2dp_1-p_1dp_2-(q_2+\frac{a}{f^2}p_1J)dp_1+(q_1+\frac{a}{f^2}p_2J)dp_2\right).
\end{equation}
We now study the orbit on the constraint $J=0$. Since $SO(2)$ is compact, its action on $\M$ is proper, moreover, the action is free since $0\in\R^4$ is not included. The map $J:\M\to \R$ has constant rank equal to $1$, then is a smooth submersion. Hence, the level set $J=0$ is a properly embedded submanifold of codimension 1 in $\M$. Furthermore, the condition $J=0$ can be written equivalently as $q\times p=0$ forcing $q$ and $p$ to be colinear. This means that the constraint surface $S=\{(q,p)\in\M\ |\ J(q,p)=0\}$ is a vector bundle of rank 1 with base manifold $\R^2\setminus\{0\}$. In fact, $S$ can be equivalently casted as:
\begin{equation}
    S=\{(q,p)\in(\R^2\setminus\{0\})\times \R^2\ |\ p\in Span(q)\subset\R^2\},
\end{equation}
clearly providing a submanifold of dimension 3. Moreover, this vector bundle is trivial since, for a fixed $\alpha \in\R\setminus\{0\}$, the map $q\mapsto(q,\alpha q)$ provides a everywhere non-null sections \cite{Kobayashi_Nomizu_1963}.\\
To better understand the global behavior, let us focus on the vector bundle structure. The constraint surface $S$ inherits the $SO(2)$ action from the full space. This action sends fiber into fiber, making $S$ a $SO(2)$-invariant vector bundle (see Appendix \ref{App:A} for a brief introduction to invariant vector bundles or \cite{Kobayashi_Nomizu_1969} for a exhaustive discussion). \\
Considering the trivialization of $S$ induced by the section
\begin{equation}
\begin{matrix}
    \sigma:&\R^2\setminus\{0\}&\to& S;\\
    & q & \mapsto & (q,q),
\end{matrix} 
\end{equation}
it provides an isomorphism with a $SO(2)$-invariant vector bundle $(\R^2\setminus\{0\})\times\R$ together with the action $(q,\alpha)\mapsto(R(\theta)q,\alpha)$, where $R(\theta)$ is the rotational matrix in the defining representation of $SO(2)$. From this we can conclude that the quotient space reads simply as $((\R^2\setminus\{0\})\times\R)/SO(2)\cong \R_+\times \R $, thus the topology of the symplectic reduction must be:
\begin{equation}
    \M\sslash SO(2)\cong \R_+\times \R.
\end{equation}
This suggests that it is possible to cover the whole manifold with a global chart.

\medskip

We parametrize the constraint surface considering $r>0,\,\rho\in\R$, and $\theta \in [0,2\pi)$ with
\[q_1=r\cos\theta,\,q_2=r\sin\theta,\,p_1=\rho\cos\theta,\,p_2=\rho\sin\theta,\]
so the projection $\pi:J^{-1}(0)\to\M\sslash SO(2)$ is simply $\pi(r,\rho,\theta)=(r,\rho)$, with $(r,\rho)$ providing a global coordinate system for $\M\sslash SO(2)$. Then the symplectic structure on the quotient reads as $\tilde\omega=w(r,\rho)dr\wedge d\rho$, from which its pullback by the projection $\pi$ is $\pi^*\tilde\omega=w(r,\rho)dr\wedge d\rho$. To find the function $w(r,\rho)$, we must impose $\omega|_{J^{-1}(0)}=\pi^*\tilde\omega$, where the induced 2-form can be computed using the parametrization of the constraint surface
\begin{equation}
\begin{split}
    \omega|_{J^{-1}(0)}&=\left(-\frac{1}{f}\delta_{ij}dq^i\wedge dp^j+\frac{a}{f}Jdp^1\wedge dp^2\right)|_{J^{-1}(0)}\\
    &=-\frac{1}{f(\rho)}\left(d(r\cos\theta)\wedge d(\rho\cos\theta)+d(r\sin\theta)\wedge d(\rho\sin\theta)\right)\\
    &=-\frac{1}{f(\rho)}dr\wedge d\rho.
\end{split}
\end{equation}
Hence, the symplectic form on $\M\sslash SO(2)$ is $\tilde\omega=-\frac{1}{f(\rho)}dr\wedge d\rho$. \\
It is important to notice that, once would be tempted to say that $r=\sqrt{q_1^2+q_2^2}$ and $\rho=\sqrt{p_1^2+p_2^2}$. However, this is not true. In that case, the coordinates $r,\rho$ would be just local coordinates for $\M\sslash SO(2)$. Indeed, they correctly cover only the open subset of the symplectic reduction given by the image of the points on $S$ of the form $p=\alpha q$ for $\alpha>0$ (or, equivalently, $\alpha<0$). Indeed, despite they are constant along the orbits, these coordinates cannot distinguish the orbit of the point $(q,\alpha q)$ to the one of $(q,-\alpha q)$. Moreover, in the case $\alpha=0$, $\rho$ fails to be smooth. A possible extension to overcome this problem is to define coordinates on $S$ with $q_1\neq 0$ as
\begin{equation}
    r=\sqrt{q_1^2+q_2^2},\ \ \ \rho=\mathrm{sgn}\left(\tfrac{p_1}{q_1}\right)\sqrt{p_1^2+p_2^2}.
\end{equation}
Clearly, together with the chart defined for $S\cap\{q_2\neq 0\}$
\begin{equation}
    r=\sqrt{q_1^2+q_2^2},\ \ \ \rho=\mathrm{sgn}\left(\tfrac{p_2}{q_2}\right)\sqrt{p_1^2+p_2^2}.
\end{equation}
the two cover the whole symplectic quotient. They agree when both $q_1$ and $q_2$ are not vanishing, and the form of the symplectic form remains the same. The fact that these maps are smooth is not immediate. We need to check what happens when $p_1$ or $\rho$ vanishes: first of all we notice that, when $q_1\neq 0$, if $p_1=0$ then $p_2=0$. Hence, we need to take care only of the vanishing of $\rho$: since $p$ and $q$ are colinear we can write $p_1=\alpha q_1$ and $p_2=\alpha q_2$, obtaining
\[\rho=\mathrm{sgn}\left(\tfrac{p_1}{q_1}\right)\sqrt{(\alpha q_1)^2+(\alpha q_2)^2}=\mathrm{sgn}\left(\tfrac{p_1}{q_1}\right)|\alpha|\sqrt{q_1^2+q_2^2}=\frac{p_1}{q_1}r,\]
ensuring the smoothness of the first map as projection from $S$ to $\M\sslash SO(2)$. For the second, we can follow an analogous procedure, recalling $\alpha=\frac{p_2}{q_2}$ when $q_2=0$. On the common domain, the change of coordinates is smooth since it is just the identity.

\bigskip

This procedure demonstrates how the deformation of the algebra is thus inherited by the symplectic quotient by choosing a coordinate frame induced by the starting one in a natural sense.

\subsection{Rotational-invariant deformed algebra in three dimensions}
\label{Sec:3-d}
We can provide an entirely analogous analysis for the three dimensional case. \\
In this case, the physical system admits an interpretation as a cosmological $SO(3)$ Yang-Mills theory in two-dimension. Here, the cosmological hypothesis of homogeneity is needed to reduced the system to finite-dimensional degrees of freedom, while the $SO(3)$ gauge symmetry provides a Gauss constraint that is reduced to the angular momentum expression within this context. \\
All the details of this specific physical model are worked out in Appendix~\ref{App:B}, while in what follows we provide the general prescription for the selected scenario, that is rotational-invariant deformed algebra in three dimensions. \\

We chose as phase space of the theory $\M=\T^*(\R^3\setminus\{0\})\cong(\R^3\setminus\{0\})\times\R^3$ and we impose the following GUP Poisson Brackets:
\begin{equation}
\label{Maggiore-GUP}
    \begin{split}
    &\{p_i,p_j\}=0, \\
    &\{q_i,q_j\}= a(p)\epsilon_{ijk}J_{k}, \\  &\{q_i,p_j\}= f(p)\delta_{ij},
    \end{split}
\end{equation}
where $J_i=\tfrac{1}{f}\epsilon_{ijk}q_jp_k$.
In this case, an element $g\in SO(3)$ acts via a rotational matrix $R_g$ in the defining representation as $(q,p)\mapsto(R_gq,R_gp)$. The basis of the Lie algebra $\so{3}$ is given by the matrices $T_i$ and we call the dual basis $\mathfrak{t}^i$. The momentum map for this action is given by:
\begin{equation}
    \begin{matrix}
        \mu: & \M& \to &\so{3}^*\\
        & (q,p)& \mapsto & J_i\mathfrak{t}^i.
    \end{matrix}
\end{equation}
By taking into account that $a(p)$ and $f(p)$ must depend only on the modulus of $p$, the $SO(3)$-equivariant of the momentum map follows:
\begin{align*}
    J_i(R_gq,R_gp)\mathfrak{t}^i=\frac{1}{f(Rp)}\epsilon_{ijk}R_{jl}R_{km}q_lp_m\mathfrak{t}^i=\frac{1}{f(p)}R_{il}\epsilon_{ljk}q_jp_k\mathfrak{t}^i=J_i(q,p)\Ad^*_g\mathfrak{t}^i.
\end{align*}
Considering an element $\xi^iT_i$ in the Lie algebra $\so{3}$, the corresponding vector field on $\M$ reads:
\begin{equation}
    \varrho(\xi)=\epsilon_{ijk}\xi^iq_k\pder{}{q_j}+\epsilon_{ijk}\xi^ip_k\pder{}{p_j}.
\end{equation}
Notice that $\xi^i$ are constants. \\
A simple computation of linear algebra shows that $\xi\inn\omega$ is equal to:
\begin{equation}
    \xi^idJ_i=\frac{1}{f}\left(\epsilon_{ijk}\xi^iq_jdp_k-\epsilon_{ijk}\xi^ip_kdq_j+\frac{a}{f}\xi^iJ_ip_jdp_j\right).
\end{equation}
The constraint surface $\mu^{-1}(0)$ is characterized by three equations $J_i=0$. The group action is proper because $SO(3)$ is compact, but is not free since whenever $q$ and $p$ are colinear, the point has a stabilizer isomorphic to $SO(2)$.  Furthermore, $0$ is not a regular value for this map. Indeed, in coordinate, the tangent map $d_p\mu:\T_p\M\to \T_0\so{3}^*\cong\R^3$ for $p\in\mu^{-1}(0)$ is given by:
\begin{equation}
    \begin{pmatrix}
        0 & -p_3 & p_2 & 0 & -q_3 & q_2\\
        p_3 & 0 & -p_1 & q_3 & 0 & -q_1\\
        -p_2 & p_1 & 0 & -q_2 & q_1 & 0
    \end{pmatrix}.
\end{equation}
By the \emph{bordered minors theorem}, it follows that ${\rm rk}(d\mu)=2$ on $\mu^{-1}(0)$. However, the rank is constant on $\mu^{-1}(0)$ as well as the dimension of the stabilizer group of each point. From the consideration on the rank, we can conclude that the constraint surface $\mu^{-1}(0)$ is locally an embedded submanifold of dimension $4$. In fact, the surface constraint is a smooth manifold of dimension $4$ that is a vector bundle over $\R^3\setminus\{0\}$ with fiber $\R$. As in the 2-dimensional case, the condition $\mu=0$ can be written equivalently as $q\times p=0$ forcing $q$ and $p$ to be colinear. Hence,
\begin{equation}
    S=\{(q,p)\in(\R^3\setminus\{0\})\times\R^3\ |\ p\in Span(q)\}\cong \mu^{-1}(0)
\end{equation}
is a trivial vector bundle $S\cong(\R^3\setminus\{0\})\times\R$. With similar considerations as in the two-dimensional case, we can deduce the topology of the symplectic quotient $\M\sslash SO(3)$. The constraint surface is a $SO(3)$-invariant vector bundle. The section
\begin{equation}
\begin{matrix}
    \sigma:&\R^3\setminus\{0\}&\to& S\\
    & q & \mapsto & (q,q),
\end{matrix} 
\end{equation}
is an invariant section because the stabilizer of $q$ is the same of $\sigma(q)$ as subgroup of $SO(3)$. Hence, the trivialization induced by $\sigma$ is a $SO(3)$-invariant vector bundle $(\R^3\setminus\{0\})\times\R$ with action of $SO(3)$ given by $R_g\times {\rm id}$ (cf. Appendix \ref{App:A}). The quotient of $\R^3$ by the defining representation of $SO(3)$ is known and we obtain
\begin{equation}
    \M\sslash SO(3)\cong ((\R^3\setminus\{0\})\times\R)/SO(3)\cong \R_+\times \R.
\end{equation}
Furthermore, every point in $\mu^{-1}(0)$ has as stabilizer a subgroup of $SO(3)$ conjugated to $SO(2)$. This condition is enough to ensure that $\mu^{-1}(0)$ is a manifold and the quotient $\mu^{-1}(0)/SO(3)$ has a natural symplectic structure. By Theorem~2.1 in \cite{Sjamaar_Lerman_1991}, the reduced space $\M\sslash SO(3)$ is constituted by a unique stratum which is naturally a symplectic manifold. A possible parametrization of the constraint surface is given by
\begin{equation}
    \begin{matrix}
        q_1=r\sin\theta \cos\phi, & p_1=\rho\sin\theta \cos\phi,\\
        q_2=r\sin\theta \sin\phi, & p_2=\rho\sin\theta \sin\phi,\\
        q_3=r\cos\theta, & p_3=\rho\cos\theta,
    \end{matrix}
\end{equation}
where $r>0,\,\rho\in\R$ and $\theta\in[0,\pi],\,\phi\in[0,2\pi)$. The induced form is $\omega|_{\mu^{-1}(0)}=\frac{1}{f(\rho)}d\rho\wedge dr$, since the projection is $\pi(r,\rho,\theta,\phi)=(r,\rho)$, the symplectic structure on $\M\sslash SO(3)$ reads
\begin{equation}
    \tilde\omega =\frac{1}{f(\rho)}d\rho\wedge dr.
\end{equation}

The procedure outlined demonstrates how even in this physically relevant case the deformation of the symplectic form is naturally preserved under symplectic reduction, thereby ensuring a well-posed application of the GUP framework for constrained Hamiltonian systems.

\section{Hamiltonian constraint in GUP} \label{Sec_III}
In this section we will deal with a different that must be treated differently, namely the case in which the theory has a single constraint $H=0$ . 
In this setting, we do not have the action of a group. Indeed, even if is intuitive to think the dynamics evolution induced by the Hamiltonian vector field $X_H$ as an action of $\R$, this is not correct since in general the vector field is not complete and so the flux cannot be defined for every time at every point. Thus, the technique of symplectic reduction can not be applied in this scenario. To treat this problem, we propose an approach closer to gauge fixing and relational observables.

\medskip

Let us consider a constrained Hamiltonian system $(\M,\omega,H)$, where $\M$ is a $(2n+2)$-dimensional symplectic manifold, $\omega$ is a symplectic structure, which in our preferred set of coordinates $(q_1,\dots,q_{n+1},p_1,\dots,p_{n+1})$ reads as in \eqref{Sp-form}, and $H$ is the Hamiltonian function and the constraint $H=0$ as well. On the level set $S=\{x\in\M\ |\ H(x)=0\}$, the Hamiltonian vector field is tangent, since $X_H\in\mathfrak{X}(S)$. We are going to define our reduced symplectic manifold as a $2n$-dimensional symplectic submanifold $(N,\omega|_N)$ of $S$ such that $X_H$ is nowhere tangent to $N$. We now need to induce a dynamics on $N$. A natural choice would be to consider the local foliation induced by letting evolve $N$ along the flow of $H$, however, this does not produce a proper dynamics on $N$ since it naturally identifies the motion with its initial value data, namely the integral curve with its point on $N$. Instead, we introduce a new vector field $T\in\mathfrak{X}(S)$ nowhere tangent to $N$ that will play the role of an external time. $T$ provides an integrable rank-1 distribution on $\T S$. By a Corollary of the Frobenius theorem \cite{Lee_2014}, for each $x\in N$ there exists a open neighborhood $U$ of $p$ and coordinates $(x_0,x_1,\dots,x_{2n})$ with $x_i\in(-\varepsilon,\varepsilon)$, in which $N\cap U$ is described by the equation $x_0=0$ and $T=\pder{}{x_0}$. Hence, we have a natural local diffeomorphism for each fixed $x_0=t$, namely $F_{t}:U_{x_0=t}\to N$ given by $F_{t}(x_0=t,x_1,\dots,x_{2n})=(0,x_1,\dots,x_{2n})$. Thus, we can project the Hamiltonian vector field $X_H$ for each fixed time obtaining a family of time-dependent vector field $X_t=dF_t(X_H)$ on $N$.\\
One can show that, under some general regularity condition for the Hamiltonian $H$, and a specific assumption for the symplectic form,
\begin{equation}
    \mathcal{L}_{X_t}\omega|_N=0.
\end{equation}
Thus, if $N$ is simply connected, there exists a family of time-dependent Hamiltonian $H_t$ on $N$ such that $X_{H_t}=X_t$.
\begin{proof}
    Suppose $H:\M\to \R$ is a constant rank map with ${\rm rk}(H)=1$. By the inverse function theorem, there exists a local set of coordinates $(y_0,y_1,\dots,y_{2n},x_{2n+1})$ for $\M$ such that the level set $S=H^{-1}(0)$ is locally described by $x_{2n+1}=f(y_0,\dots,y_{2n})$. For what we discussed before, there exists a local change of coordinates on $S$ such that $N$ is described by $x_0=0$. Composing this change of coordinates with the one given above, we obtain a new coordinate system $(x_0,x_1,\dots,x_{2n},x_{2n+1})$ with the properties just described. In this set of coordinates, the symplectic form on $\M$ reads:
    \begin{equation}
        \omega=\omega_adx^0\wedge dx^a+\omega_{ab}dx^a\wedge dx^b+\mathring{\omega}_adx^a\wedge dx^{2n+1}+\mathring{\omega}\,dx^0\wedge dx^{2n+1},
    \end{equation}
    where $a,b$ run from $1$ to $2n$. We know that the Hamiltonian vector field is tangent to $S$, hence $X_H|_S\inn\omega|_S=(X_H\inn\omega)|_S=(dH)|_S=0$. Moreover:
    \begin{equation}
        \begin{split}
            &\omega|_S=\left(\omega_a-\mathring{\omega}_a\pder{f}{x^0}+\mathring{\omega}\pder{f}{x^a}\right)\bigg|_S dx^0\wedge dx^a+\left(\omega_{ab}+\mathring{\omega}_a\pder{f}{x^b}\right)\bigg|_Sdx^a\wedge dx^b,\\
        & X_H|_S=X^a\pder{}{x^a}+X^0\pder{}{x^0},
        \end{split}
    \end{equation}
    where $X^0$ and $X^a$ depend on $(x_0,\dots,x_{2n})$ only. The equation $X_H\inn\omega=0$ provides a set of $2n+2$ equations on $S$ (by now all the functions are considered evaluated on $S$):
    \begin{equation}
        \begin{split}
            &\pder{H}{x^0}=-\frac{1}{2}\omega_aX^a,\\
        &\pder{H}{x^{2n+1}}=\frac{1}{2}(\mathring{\omega}X^0+\mathring{\omega}_aX^a),\\
        &\pder{H}{x^a}=\frac{1}{2}X^0\omega_a+\omega_{ba}X^b.
        \end{split}
    \end{equation}
    Furthermore, in this coordinates, the map $F_t:(x_0=t,x_1,\dots,x_{2n})\mapsto(0,x_1,\dots,x_{2n})$ provides vector fields
    \begin{equation}
        X_t=X^a|_{x_0=t}\pder{}{x^a}.
    \end{equation}
    The symplectic structure $\omega|_N$ has a simple form since it is just the second term of $\omega|_S$ evaluated on $x_0=0$. Hence
    \begin{equation}
        X_t\inn\omega|_N=X^a\omega_{ab}\big|_{x_0=0}dx^b+\frac{1}{2}\left(\mathring{\omega}_aX^b\pder{f}{x^b}-\mathring{\omega}_bX^b\pder{f}{x^a}\right)\bigg|_{x_0=0}dx^a.
    \end{equation}
    Whenever $\omega_a=0=\mathring{\omega}_a$, we get
    \begin{equation}
         X_t\inn\omega|_N=\pder{H}{x^a}\bigg|_{x_0=t}dx^a.
    \end{equation}
    which is a closed form on $N$. From the Cartan magic formula follows $\mathcal{L}_{X_t}\omega|_N=0$, hence each $X_t$ is a symplectic vector field. If $N$ is simply connected, namely $H_{\rm dR}^1(N)=0$, every closed form is exact, and a vector field is symplectic if and only if it is Hamiltonian.    
\end{proof}
From the proof is clear that our specific assumption of the symplectic form is that the flat coordinates satisfy $\{x^0,x^a\}=0=\{x^a,x^{2n+1}\}$.

\subsection{Cosmology as a constrained GUP theory}
An important example of this procedure is provided by Cosmology. We are going to consider Bianchi models in Misner variables in a GUP scenario \cite{ Misner:1969ae}. The phase space is $\M=\T^*\R^3$ and with coordinates $(q_0,q_1,q_2,p_0,p_1,p_2)$ satisfying the following non-vanishing Poisson brackets
\begin{equation}
    \begin{split}
        &\{q_i,p_j\}=f(p)\delta_{ij},\\
        &\{q_1,q_2\}=l(q,p),
    \end{split}
\end{equation}
and Hamiltonian given by
\begin{equation}
    H=-p_0^2+p_1^2+p_2^2+e^{4q_0}V(q_1,q_2),
\end{equation}
where $V$ is a positive smooth function which characterize the model. It is easy to find the symplectic form in this set of coordinates
\begin{equation}
     \omega_{ab}=\frac{1}{f}
    \begin{pmatrix}
    0 & 0 & 0 &-1 & 0 & 0\\
    0 & 0 & 0 & 0 &-1 & 0\\
    0 & 0 & 0 & 0 & 0 & -1\\
    1 & 0 & 0 & 0 & 0 & 0\\
    0 & 1 & 0 & 0 & 0 & \frac{l}{f}\\
    0 & 0 & 1 & 0 &-\frac{l}{f} & 0
   \end{pmatrix}.
\end{equation}
We recall that, by the closure condition \cite{Bruno_Segreto_2024}, here $l(q,p)$ must be necessarily:
\[l(q,p)=\pder{f}{p_1}q_2-\pder{f}{p_2}q_1.\]
We can now proceed to study the dynamics. The Hamiltonian vector field can be computed and it results to be:
\begin{equation}
    \begin{split}
        X_H=&2p_0f\pder{}{q_0}-\left(2p_1f+e^{4q_0}l\pder{V}{q_2}\right)\pder{}{q_1}-\left(2p_2f-e^{4q_0}l\pder{V}{q_1}\right)\pder{}{q_2}\\
        &+4e^{4q_0}fV\pder{}{p_0}+e^{4q_0}f\pder{V}{q_1}\pder{}{p_1}+e^{4q_0}f\pder{V}{q_2}\pder{}{p_2}.
    \end{split}
\end{equation}
The constraint surface is a fiber bundle over the configuration space, in which the fiber is a two-sheeted hyperboloid. Indeed, for each fixed $q$, the equation $p^2_0=p^2_1+p^2_2+e^{4q_0}V(q_1,q_2)$ describe a hyperboloid, and the condition $V>0$ ensure the ellipticity. Moreover, since the base manifold is contractible, the fiber bundle is trivial. Hence, we can choose $(q_0,q_1,q_2,p_1,p_2)$ as a set of global coordinates for the positive branch $S={\rm Graph}\left(\sqrt{p^2_1+p^2_2+e^{4q_0}V(q_1,q_2)}\right)$. The Hamiltonian vector field is tangent to $S$ and reads as:
\begin{equation}
\begin{split}
    X_H|_S=&2\sqrt{p^2_1+p^2_2+e^{4q_0}V(q_1,q_2)}f\pder{}{q_0}+e^{4q_0}f\pder{V}{q_1}\pder{}{p_1}+e^{4q_0}f\pder{V}{q_2}\pder{}{p_2}\\
        &-\left(2p_1f+e^{4q_0}l\pder{V}{q_2}\right)\pder{}{q_1}-\left(2p_2f-e^{4q_0}l\pder{V}{q_1}\right)\pder{}{q_2}.
\end{split}
\end{equation}
The symplectic manifold $N$ is given by the submanifold of $S$ identified by $q_0=0$. On this submanifold, the induced symplectic form is calculated to be:
\begin{equation}
    \omega|_N=\frac{1}{f(p)}\left(dp_1\wedge dq_1+dp_2\wedge dq_2+\frac{l(q,p)}{f(p)}dp_1\wedge dp_2\right).
\end{equation}
We choose the time vector field to be just $T=\pder{}{q_0}$, in such a way our coordinate system is a global flat chart for a tubular neighborhood on $N$, and so the family of maps $F_{t}(q_0=t,q_1,q_2,p_1,p_2)=(0,q_1,q_2,p_1,p_2)$ defines a family ${X_t}$ of vector fields on $N$, namely:
\begin{equation}
\begin{split}
    X_t=dF_t(X_H|_S)&=e^{4t}f\pder{V}{q_1}\pder{}{p_1}+e^{4t}f\pder{V}{q_2}\pder{}{p_2}-\left(2p_1f+e^{4t}l\pder{V}{q_2}\right)\pder{}{q_1}\\
    &-\left(2p_2f-e^{4t}l\pder{V}{q_1}\right)\pder{}{q_2}.
\end{split}
\end{equation}
Now, to compute $\mathcal{L}_{X_t}\omega|_N$, it is convenient to use the Cartan magic formula $\mathcal{L}_{X_t}\omega|_N=d(X_t\inn\omega|_N)$, here $d$ is the exterior derivative on $N$. Let us first compute the quantity $X_t\inn\omega|_N$:
\begin{equation}
    X_t\inn\omega|_N=e^{4t}\pder{V}{q_1}dq_1+e^{4t}\pder{V}{q_2}dq_2+2p_1dp_1+2p_2dp_2,
\end{equation}
which is clearly a closed 1-form. Since $N\cong\R^4$ is simply connected, any closed 1-form is also exact, hence $\mathcal{L}_{X_t}\omega|_N=0$. Finally, it is easy to find a family of functions $\{H_t\}$ such that $X_t=dH_t$, explicitly:
\begin{equation}
    H_t=p_1^2+p_2^2+e^{4t}V(p_1,p_2).
\end{equation}
which is the reduced Hamiltonian on $N$, the reduced phase space, whose geometrical structure is coherently dictated by $\omega|_N$.

\subsection{Conceptual comments on the Cosmology part}

Given a constrained Hamiltonian system \((\mathcal{M}, \omega, H)\), the general geometric procedure outlined in the previous sections allows us to successfully identify the reduced phase space of the theory, defined by the symplectic submanifold \(N\) along with the symplectic form \(\omega|_N\), which is the restriction of the general symplectic form \(\omega\) to the symplectic submanifold.  
The importance of this prescription, especially in the cosmological context, lies in the fact that it allows us to overcome an ambiguity in the construction of classical GUP theories for constrained Hamiltonian systems.  
Most of the studies scattered in the literature start from the imposition of a deformed symplectic form \emph{directly} on the submanifold defining the reduced phase space \cite{segreto2025dynamics, battisti2009mixmaster}.  
In the spirit of an effective approach to the dynamics, the rationale behind this procedure still holds.  
Indeed, the degrees of freedom surviving the reduction procedure can be considered the physical ones. As such, in an effective framework, they can be regarded as the only degrees of freedom that should be affected by a deformation of the dynamics.  
Despite this, this imposition by hand entails bypassing the geometric structure of the theory and clearly ignores the relation between the imposed symplectic form and the original symplectic form of the whole phase space, raising doubts about the physical and formal validity of the GUP method in this dynamical context.

The developed geometric framework settles the matter by providing a precise procedure of reduction and clarifying the connection between the two approaches.  
As is clearly shown in the given proof, given a suitable deformed symplectic form, concerning \emph{all} the degrees of freedom of the theory, we can obtain the symplectic form dictating the geometry of the reduced phase space as a proper restriction of the original symplectic form to the symplectic submanifold \(N\) on which the dynamics takes place.

From the discussion on the cosmological sector, it can be seen how the induced symplectic form  
has essentially the same structure as the original one, limited to the reduced degrees of freedom. This is exactly the general symplectic form usually imposed by hand on the reduced phase space in the effective method.  
This provides a rigorous motivation for the direct employment of the reduced symplectic form, showing clearly how it is connected with a symplectic form of a certain kind, involving all the degrees of freedom of the theory.  
As a consequence, the prescription we provide defines a general method and the necessary premise for the study of the dynamics of GUP-deformed constrained Hamiltonian systems.

In this respect, it is important to comment on the general structure of the symplectic form relative to the whole phase space.  
As shown clearly in the proof, we need, in the chosen coordinate system, to satisfy the following conditions:
\[
    \{x^0, x^a\} = 0 = \{x^a, x^{2n+1}\}.
\]
Given the role played by the \(x^0\) coordinate, which represents our time variable, these Poisson brackets set a requirement for commutativity between the space variables and the time variable.  
In other words, the deformation of the symplectic form which affects the space coordinates cannot be extended to the time coordinate.  
If this condition is violated, at least within our construction, the Lie derivative along the family of Hamiltonian fields of the induced symplectic form is different from zero.  
This equals the breakdown of the Hamiltonian formalism, since it implies that the dynamics does not conserve the geometry of the phase space.  
In particular, this signals a non-conservation of the phase space volume and violation of the reverse-time symmetry.  
On an intuitive level, this seems to be in agreement with the well-established fact that quantum non-commutative theories where non-commutativity is extended also to the time generator lead to non-unitary evolution (e.g., non-unitary S-matrix, causality problems in field theories \cite{gomis2000space, salminen2011noncommutative, chu2002hermitian, balachandran2004unitary}).  
Loosely speaking, the impossibility of defining a Hamiltonian formalism implies ill-definiteness of the Hamiltonian itself, which, when promoted to an operator in the quantum framework, could lead to non-unitary evolution due to the possible loss of self-adjointness or symmetry.  
Clearly, we cannot exclude that an extension of the present procedure might be possible, in order to recover the conservation of the induced symplectic form. Nevertheless, as the present procedure stands, this imposes a precise constraint on the form of the original \(\omega\) and excludes the possibility of non-commutativity among space and time variables.

\section{Conclusion} \label{Concl}
In this work, we have further developed the construction of a well-defined classical formulation for GUP theories, building on the analysis initiated in~\cite{Bruno_Segreto_2024}, where the general requirements for the existence of a symplectic structure were established. The focus here has been on understanding the behavior of the symplectic structure when constraints are present.
More precisely, given a GUP-deformed symplectic form $\omega$ on a phase space $\mathcal{M}$, we examined how this structure behaves under symplectic reduction or when restricted to a specific submanifold. Since constrained dynamical systems are ubiquitous in physics, particularly in gauge theories and gravitational models, this question naturally acquires a broad degree of generality.

\smallskip

We analyzed two main scenarios. The first corresponds to systems in which the constraints arise from a symmetry encoded in the action of a Lie group $G$. We focused on rotationally invariant GUP algebras in two and three dimensions. For these systems, we employed the standard symplectic reduction procedure, constructing the reduced phase space as the quotient $\mu^{-1}(0)/G$, where $\mu$ is the momentum map, identified here with the (deformed) angular momentum, and $\mu^{-1}(0)$ is the corresponding level set that defines the constraint surface. Once this quotient is formed, the symplectic two-form can be consistently projected to obtain its reduced counterpart, which governs the dynamics on the reduced phase space. As expected, the general structure of the symplectic two-form is preserved under this procedure, now restricted to the true physical degrees of freedom of the system. 

\smallskip

On the other hand, we considered the case in which the theory is characterized by a single constraint—namely, the Hamiltonian itself. In this situation, the symplectic quotient method cannot be applied, since, in general, no group action is available. Nevertheless, we provided a general prescription to identify an appropriate reduced phase space as a submanifold of the constraint surface. We successfully induced a dynamics (as a one-parameter family of Hamiltonians) on this submanifold by selecting an external time variable in the form of a vector field on the phase space, and we reconstructed the corresponding reduced symplectic form. Once again, the resulting structure preserves the formal properties of the original symplectic two-form $\omega$, restricted to the physical degrees of freedom. However, for this preservation to hold, some requirements must be imposed on the general GUP symplectic structure; in particular, noncommutativity between the prospective time variable and the spatial variables cannot occur. This constitutes a significant result, as it establishes necessary criteria for the physical viability of a GUP symplectic form—at least within the framework of our construction.\\
Given the promising role of GUP frameworks in mathematical cosmology, especially in the semiclassical analysis of cosmological models toward the initial singularity, we devoted special attention to understanding the dynamics emerging from the restriction of the symplectic form to the reduced phase space defined by the Hamiltonian constraint $H=0$. To this end, we worked out explicit computations for the general Hamiltonian of the Bianchi models in the Misner variables. The application of the machinery developed here leads us to obtain the reduced phase space, the resulting (time-dependent) Hamiltonian governing the dynamics, and the explicit expression of $\omega$ restricted to the submanifold, expressed in terms of the selected physical degrees of freedom. This final result is particularly noteworthy. Since the reduced symplectic form coincides with the one typically imposed \emph{ad hoc} on the reduced phase space — usually obtained by assuming the GUP two-form only after the reduction — it provides a rigorous justification for those dynamical treatments. In particular, it shows that the naive procedure employed in the literature is, in fact, validated by the correct geometric reduction, thereby offering the appropriate prescription for handling dynamics on the reduced phase space in the context of GUP theories.

\smallskip

In summary, the work furthers the construction of a consistent formulation of GUP theories, exploring the behavior of the GUP-symplectic structure whenever some kind of constraints defines the dynamics of the system of interest.\\
In this way, the prescription we gave provides a consistent and coherent way to study the GUP dynamics of a large class of physical systems, with particular attention to the mathematical cosmological sector, whose models are potentially relevant for the study of the dynamics of the very early universe \cite{ellis2006bianchi}.

\section*{Acknowledgement}
M.B. is supported by the MUR FIS2 Advanced Grant ET-NOW (CUP:~B53C25001080001) and by the INFN TEONGRAV initiative.

\appendix
\section{Vector bundles and invariant vector fields}
\label{App:A}
In this Appendix, we review a couple of useful facts about group actions on vector bundles. First of all we recall the definition of (smooth) vector bundle.
\begin{definition}
Let $M$ be a smooth manifold. A (real) \emph{vector bundle} of rank $r$ over $M$ is a smooth manifold $E$ together with a surjective smooth map $\pi : E \to M$ satisfying the following conditions:
\begin{itemize}
    \item For each $x \in M$, the fiber $E_x \doteq \pi^{-1}(x)$ is endowed with the structure of an $r$-dimensional real vector space,
    \item $E$ is locally trivial: for each $x \in M$, there exists a neighborhood $U$ of $x$ in $M$ and a diffeomorphism $\varphi : \pi^{-1}(U) \to U \times \mathbb{R}^r$ such that $\operatorname{pr}_U \circ \varphi = \pi$, and the restriction $\varphi|_{E_x}$ is a vector space isomorphism from $E_x$ to $\mathbb{R}^r$.
\end{itemize}
\end{definition}
A local \emph{vector field} is a smooth local section $\sigma : U \subset M \to E$ of the vector bundle, where $U$ is an open subset of $M$. A (global) vector field is a global smooth section $\sigma : M \to E$. A local \emph{frame} for $E$ over an open set $U \subset M$ is an ordered collection of $r$ local sections of $E$, $(\sigma_1, \dots, \sigma_r)$, such that for each $x \in U$, the $r$-tuple $(\sigma_1(x), \dots, \sigma_r(x))$ forms a basis of the fiber $E_x$. If $U = M$, it is called a \emph{global frame} or a \emph{frame of vector fields}.

Every local frame for a vector bundle is associated with a local trivialization. As a consequence, a vector bundle is trivial, i.e. it is diffeomorphic to $M\times \R^r$, if and only if it admits a global frame.

\medskip

Let us consider a group action $\alpha:G\times M\to M$ of a Lie group $G$ on $M$. Suppose there exists a lift of this action on the vector bundle
\begin{equation}
    \tilde\alpha:G\times E\to E \quad \text{such that} \quad \pi\circ\tilde\alpha=\alpha\circ\pi,
\end{equation}
and it sends fiber into fiber linearly, i.e. the map $\tilde\alpha_g:E_x\to E_{\alpha_g(x)}$ is linear for any $g\in G$. In such case, we say $E$ together with the action $\tilde\alpha$ is an invariant vector bundle. An \textit{invariant vector field} is a section such that
\begin{equation}
    \sigma\circ \alpha_g=\tilde\alpha_g\circ \sigma, \quad \text{for any }g\in G.
\end{equation}
Clearly, to be well-defined $\alpha_g(U)\subset U$ for all $g\in G$. An invariant frame is a frame composed by invariant vectors fields only.

\begin{proposition}
    An $\tilde\alpha$-invariant vector bundle is trivial, i.e. diffeomorphic to $M\times\R^r$ together with the action $\alpha\times\mathrm{id}$, if and only if it admits a global invariant frame.
\end{proposition}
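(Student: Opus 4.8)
The plan is to prove both implications, with the forward direction being essentially a matter of unpacking definitions and the reverse direction requiring the construction of a trivialization that respects the group action. Throughout, I will use the fact stated earlier in the Appendix that a vector bundle is trivial if and only if it admits a global frame, and I will refine this standard fact by tracking equivariance.

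First I would prove the forward implication: suppose the invariant vector bundle $E$ is diffeomorphic to $M\times\R^r$ via a bundle isomorphism intertwining $\tilde\alpha$ with $\alpha\times\mathrm{id}$. Then I take the standard constant frame on $M\times\R^r$, namely the sections $e_j:x\mapsto(x,v_j)$ where $(v_1,\dots,v_r)$ is the standard basis of $\R^r$. Each $e_j$ is manifestly invariant under $\alpha\times\mathrm{id}$, since $(\alpha\times\mathrm{id})_g(x,v_j)=(\alpha_g(x),v_j)=e_j(\alpha_g(x))$. Pulling these back through the isomorphism yields a global frame $(\sigma_1,\dots,\sigma_r)$ for $E$; because the isomorphism intertwines the two actions, each $\sigma_j$ satisfies $\sigma_j\circ\alpha_g=\tilde\alpha_g\circ\sigma_j$, so the frame is invariant.

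For the reverse implication, suppose $E$ admits a global invariant frame $(\sigma_1,\dots,\sigma_r)$. The plan is to define the trivialization $\Phi:M\times\R^r\to E$ by $\Phi(x,c^1,\dots,c^r)=\sum_{j}c^j\sigma_j(x)$. By the general fact recalled in the Appendix, since $(\sigma_1,\dots,\sigma_r)$ is a global frame, $\Phi$ is a diffeomorphism and a vector bundle isomorphism over $M$. The remaining content is to check that $\Phi$ intertwines $\alpha\times\mathrm{id}$ with $\tilde\alpha$. I would compute directly: $\Phi\big((\alpha\times\mathrm{id})_g(x,c)\big)=\sum_j c^j\sigma_j(\alpha_g(x))=\sum_j c^j\tilde\alpha_g(\sigma_j(x))$, where the last equality uses invariance of each $\sigma_j$. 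Since $\tilde\alpha_g$ acts linearly on fibers, this equals $\tilde\alpha_g\big(\sum_j c^j\sigma_j(x)\big)=\tilde\alpha_g(\Phi(x,c))$, which is exactly the intertwining relation.

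The only genuine subtlety—and the step I expect to be the main obstacle—is the linearity of $\tilde\alpha_g$ on fibers, which is precisely where the definition of an invariant vector bundle is used: without it, the manipulation $\sum_j c^j\tilde\alpha_g(\sigma_j(x))=\tilde\alpha_g(\sum_j c^j\sigma_j(x))$ would fail, and the map $\Phi$ would not be equivariant. This is not a deep difficulty but it is the crux that distinguishes the \emph{equivariant} triviality statement from the ordinary one; everything else reduces to the non-equivariant result already cited. I would therefore make sure to invoke fiberwise linearity explicitly at that point, and note that the well-definedness condition $\alpha_g(U)\subset U$ recorded in the Appendix guarantees the global sections compose correctly with the action.
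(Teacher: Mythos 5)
Your proposal is correct and follows essentially the same route as the paper: in one direction you pull back the constant frame of $M\times\R^r$ through the intertwining isomorphism, and in the other you build the trivialization from the invariant frame and verify equivariance using fiberwise linearity of $\tilde\alpha_g$ (the paper writes the map in the opposite direction, $E\to M\times\R^r$, but this is immaterial). The only cosmetic difference is that you delegate the verification that the pulled-back sections form a frame to the non-equivariant fact recalled earlier, whereas the paper checks spanning and linear independence explicitly.
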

\begin{proof}
    Suppose $E$ is trivial, namely there exists a diffeomorphism $\phi:E\to M\times \R^n$ such that $\alpha\times\mathrm{id}=\phi\circ \tilde\alpha\circ \phi^{-1}$ and linear on the fibers. Let $\{e_i\}$ be the canonical basis of $\R^r$, then the frame $\sigma_i$, with $1\leq i\leq r$, defined by 
    \begin{equation*}
        \sigma_i(x)=\phi^{-1}(x,e_i), \text{for all } x \in M
    \end{equation*}
    is a global invariant frame. Indeed, the vector fields are global by definition, moreover $\{\sigma_i(x)\}$ generates $E_x$ because every $u\in E_x$ can be written as the preimage of a unique $v\in \R^r$, and so as a linear combinations of $\sigma_i(x)$:
    \begin{equation}
        u=\phi^{-1}(x,v)=\phi^{-1}(x,v^ie_i)=v^i\phi^{-1}(x,e_i)=v^i\sigma_i(x).
    \end{equation}
    The set $\{\sigma_i(x)\}$ is also a set of linear independent vectors for every $x\in M$; indeed let us assume that there exists a vanishing linear combination of vectors $\sigma_i(x)$ with coefficients $a^i$ not all zero,
    \begin{equation}
        0=a^i\sigma_i(x)=\phi^{-1}(x,a^ie_i),
    \end{equation}
    then, by linearity of $\phi$ on the fibers, it implies that $a^ie_i=0$, contradicting the hypothesis that $e_i$ is a basis.\\
    Furthermore, the vector fields $\sigma_i$ are invariant:
\begin{equation*}
    \tilde \alpha_g \circ\sigma_i(x)=\tilde\alpha_g\circ \phi^{-1}(x,e_i)=\phi^{-1}(\alpha_g(x),e_i)=\sigma_i\circ \alpha_g(x), \quad \forall x\in M,\,\forall g \in G.
\end{equation*}

For the converse, let $\nu_i:M\to E$ be a global invariant frame. Let us consider the diffeomorphism given by
\begin{equation*}
\begin{matrix}
    \phi:&E&\to &M\times \R^r;\\
    &u&\mapsto & (\pi(u),u^i),
\end{matrix}
\end{equation*}
where $u^i$ are the components $u=u^i\nu_i(\pi(u))$. The inverse is immediate: $\phi^{-1}(x,u^i)=u^i\nu_i(x)$. It is easy to show that it is linear. Finally, it intertwines the actions:
\begin{equation*}
    \phi\circ \tilde \alpha_g(u)=(\alpha_g(\pi(u)),u'^i)=(\alpha_g(\pi(u)),u^i)=(\alpha_g\times\mathrm{id})\circ \phi(u).
\end{equation*}
Indeed, $u'^i=u^i$. In fact. $u'^i$ are such that $\tilde\alpha_g(u)=u'^i\nu_i(\alpha_g(\pi(u)))$, form which follows
\begin{equation*}
    u^i\tilde\alpha_g\circ \nu_i(\pi(u))=\tilde\alpha_g(u^i\nu_i(\pi(u)))= \tilde\alpha_g(u)=u'^i\tilde\alpha_g\circ \nu_i(\pi(u)),
\end{equation*}
and so the equivalence.
\end{proof}

\section{Cosmological 2d Yang-Mills theory}
\label{App:B}
The constrained Hamiltonian system introduced in Section~\ref{Sec:3-d} can be regarded as a two-dimensional cosmological $SO(3)$ gauge field theory. Consider a two-dimensional spacetime $M=\mathbb{R}\times S^1$ endowed with the flat Lorentzian metric
\[
ds^2_M=-dt^2+d\theta^2,
\]
where $t$ is the coordinate along $\mathbb{R}$, while $\theta\in(0,2\pi)$ is the standard angular coordinate on $S^1$.  

The cosmological assumption of homogeneity requires us to consider a transitive action of a Lie group on the spacelike foliation. In this setting, the natural action is $U(1)$ acting on $M$ as $(t,\theta)\mapsto(t,\theta+\varphi)$. Clearly, $d\theta$ is invariant under this action, and thus the dual vector $\partial_{\theta}$ can be regarded as the generator of the Lie algebra $\mathfrak{u}(1)$.  

By the cosmological assumption, the initial data must be given by a homogeneous $SO(3)$ gauge potential on $S^1$ at a fixed time $t_0$. These fields are classified by Wang’s theorem \cite{Wang_1958}, which establishes a one-to-one correspondence with linear maps $\phi:\mathfrak{u}(1)\to\mathfrak{so}(3)$. Consequently, every homogeneous gauge potential on $S^1$ can be written as
\begin{equation}
    A=\phi\circ\vartheta_{MC},
\end{equation}
where $\vartheta_{MC}$ is the Maurer--Cartan form on $U(1)$. From a geometric perspective, it is immediate to observe the existence of a residual global $SO(3)$ gauge symmetry. Moreover, the property of homogeneity must be preserved during time evolution. This approach is well established in the context of gauge theories \cite{Harnad_Shnider_Vinet_1980,Bojowald_Kastrup_2000}, although it has been subject to some criticism (for applications in cosmology, see \cite{Bojowald_2000,Bruno_23,Bruno_2025}).

Concretely, let us consider a generic $SO(3)$ gauge potential
\begin{equation}
    A(t,\theta)=A_0(t,\theta)\,dt+A_1(t,\theta)\,d\theta,
\end{equation}
where $A_0(t,\theta)$ and $A_1(t,\theta)$ are $\mathfrak{so}(3)$-valued functions. Homogeneity requires that these functions are independent of $\theta$. We may expand them in a basis $\{T_i\}$ of $\mathfrak{so}(3)$:
\[
T_1=\begin{pmatrix}
0 & 0 & 0\\
0 & 0 & 1\\
0 & -1 & 0
\end{pmatrix},\quad
T_2=\begin{pmatrix}
0 & 0 & -1\\
0 & 0 & 0\\
1 & 0 & 0
\end{pmatrix},\quad
T_3=\begin{pmatrix}
0 & 1 & 0\\
-1 & 0 & 0\\
0 & 0 & 0
\end{pmatrix}.
\]
Thus we obtain
\begin{equation}
    \begin{split}
        A_1(t)&=A^1_1(t)T_1+A^2_1(t)T_2+A^3_1(t)T_3\\
&\doteq q_1(t)T_1+q_2(t)T_2+q_3(t)T_3
=\begin{pmatrix}
0 & q_3 & -q_2\\
-q_3 & 0 & q_1\\
q_2 & -q_1 & 0
\end{pmatrix}.
    \end{split}
\end{equation}
The variables $q_1,\,q_2,\,q_3$ will serve as our configurational degrees of freedom.  

\medskip
To construct the constrained Hamiltonian system, let us start with a simple field theory: a two-dimensional $SO(3)$ \textrm{BF} theory. Although the Hamiltonian analysis of this theory is well known, we reproduce the computation here for the sake of self-consistency.  

In BF theory, one introduces an $\mathfrak{so}(3)$-valued function $B$, which can be decomposed as
\[
B = B^iT_i,
\]
with the homogeneity condition again enforcing independence from $\theta$. As we shall see, the $B^i$ play the role of conjugate momenta, $B^i\equiv p_i$. The action is given by
\begin{equation}
    S_{\rm BF}=\int_M \mathrm{tr}(BF),
\end{equation}
which expands to
\begin{equation}
    \begin{split}
        S_{\rm BF}&=\int_M \delta_{ij}B^iF^i_{\mu\nu}\,dx^{\mu}\wedge dx^{\nu}=\int_{\mathbb{R}\times S^1} B^iF^i_{01}\,dt\,d\theta\\
&=\int_{\mathbb{R}\times S^1} B^i\bigl(\partial_0A_1^i-\partial_1A_0^i+\epsilon_{ijk}A^j_0A^k_1\bigr)\,dt\,d\theta.
    \end{split}
\end{equation}
From this action we obtain the momenta
\begin{equation}
    \frac{\delta S_{\rm BF}}{\delta(\partial_0A_0^i)}=0,
    \qquad
    \frac{\delta S_{\rm BF}}{\delta(\partial_0A_1^i)}=B^i.
\end{equation}
Thus, the Hamiltonian contains three Lagrange multipliers $\lambda^i$ and three formal conjugate momenta $\Pi^i$ associated with $A_0^i$:
\begin{equation}
    \mathbf{H}=\int_{S^1} d\theta \,\bigl(B^i(\partial_0A_1^i)+\lambda^i\Pi_i-\mathcal{L}\bigr)
=\int_{S^1} d\theta \,\bigl(\lambda^i\Pi_i-A_0^i(\partial_1B^i+\epsilon_{ijk}A^j_1B^k)\bigr).
\end{equation}
The fundamental Poisson brackets on the phase space are
\begin{equation}
\begin{split}
\{A^i_0(t,\theta),\Pi^j(t,\theta')\}&=\delta^{ij}\delta(\theta-\theta'),\\
\{A^i_1(t,\theta),B^j(t,\theta')\}&=\delta^{ij}\delta(\theta-\theta').
\end{split}
\end{equation}
We thus obtain the primary constraint $\Pi^i=0$. Ensuring consistency under time evolution generates the secondary constraint
\begin{equation}
    \{\Pi^i,\mathbf{H}\}=\partial_1B^i+\epsilon_{ijk}A^j_1B^k \equiv 0.
\end{equation}
Therefore, $A_0^i$ has trivial dynamics and acts solely as a Lagrange multiplier. 

\smallskip

Starting from the action $S_{\rm BF}$ and imposing homogeneity, the integration over $S^1$ contributes only an overall constant factor. We then found the homogeneous BF action
\begin{equation}
    S_{\rm BF}^{\rm hom}=\int_{\mathbb{R}} p_i\bigl(\dot q^i+\epsilon_{ijk}\Lambda^jq^k\bigr)\,dt,
\end{equation}
where $B^i=p_i$, $A^i_1=q^i$, and $\Lambda^i=A^i_0$. We are thus left with a final-dimensional phase space with fundamental Poisson brackets
\begin{equation}
    \begin{split}
        \{\Lambda^i,\Pi_j\}=\delta^i_j,\\
\{q^i,p_j\}=\delta^{i}_j;
    \end{split}
\end{equation}
and a fully constrained Hamiltonian
\begin{equation}
    \mathbf{H}=\lambda^i\Pi_i+\Lambda^i\epsilon_{ijk}q^jp^k.
\end{equation}
Indeed, by the request of trivial time evolution of the constraint $\Pi_i=0$, we obtain a secondary constraint
\begin{equation}
    \{\Pi^i,\mathbf{H}\}=\epsilon_{ijk}q^jp^k\equiv 0
\end{equation}
As expected, since BF theory is a topological field theory, the Hamiltonian is entirely constrained. Moreover, it coincides with the Gauss constraint, which generates gauge transformations, in agreement with the fact that the dynamical evolution in a topological field theory reduces to gauge transformations.

\medskip

We now turn to a less trivial example, namely a two-dimensional Yang--Mills theory with gauge group $SO(3)$ over $M$. The action is
\begin{equation}
    S_{\rm YM}=-\frac{1}{4}\int_M \mathrm{tr}(F\wedge \star F).
\end{equation}
In two dimensions, $\star F$ reduces to an $\mathfrak{so}(3)$-valued function that can be written as
\[
\star F=-F_{01}^iT_i, 
\qquad 
F_{01}^i=\partial_0A_1^i-\partial_1A_0^i+\epsilon_{ijk}A^j_0A^k_1.
\]
Hence the action becomes
\begin{equation}
    S_{\rm YM}=\frac{1}{2}\int_{\mathbb{R}\times S^1}\sum_{i=1}^3 (F^i_{01})^2\,dt\,d\theta.
\end{equation}
From this action we obtain the canonical momenta:
\begin{equation}
    \frac{\delta S_{\rm YM}}{\delta(\partial_0A_0^i)}=0,
\qquad
\frac{\delta S_{\rm YM}}{\delta(\partial_0A_1^i)}=F^i_{01}\equiv E^i.
\end{equation}
The Hamiltonian is then computed as
\begin{equation}
    \begin{split}
        \mathbf{H}&=\int_{S^1} d\theta \,\bigl(E^i(\partial_0A_1^i)+\lambda^i\Pi_i-\mathcal{L}\bigr) \\
&=\int_{S^1} d\theta \,\bigl(\lambda^i\Pi_i+E^i(\partial_1A_0^i-\epsilon_{ijk}A^j_0A^k_1)+\tfrac{1}{2}E^iE^i\bigr).
    \end{split}
\end{equation}
As before, we find a primary constraint $\Pi^i\equiv 0$, whose consistency under time evolution gives rise to the Gauss constraint:
\begin{equation}
    \{\Pi^i,\mathbf{H}\}=\partial_1E^i+\epsilon_{ijk}A^j_1E^k \equiv 0.
\end{equation}

\smallskip

As before, we impose homogeneity and find the homogeneous Yang-Mills action
\begin{equation}
    S_{\rm YM}^{\rm hom}=\frac{1}{2}\int_{\mathbb{R}}\sum_{i=1}^3 (\dot q^i+\epsilon_{ijk}\Lambda^jq^k)^2\,dt.
\end{equation}
From which the canonical momenta are
\begin{equation}
    \Pi_i=\frac{\delta S_{\rm YM}^{\rm hom}}{\delta\dot\Lambda^i}=0,
\qquad
\frac{\delta S_{\rm YM}}{\delta\dot q^i}=\dot q^i+\epsilon_{ijk}\Lambda^jq^k\equiv p_i.
\end{equation}
Thus the Hamiltonian is
\begin{equation}
    \mathbf{H}=\lambda^i\Pi_i+\Lambda^i\epsilon_{ijk}q^jp^k+\frac{1}{2}p^ip^i.
\end{equation}
In this case, on the phase space $(q^i,p_i)$, the Gauss constraint is still present $\epsilon_{ijk}q^jp^k\equiv0$, generating gauge transformations. However, in contrast with the BF theory, there is also a genuine Hamiltonian function
\[
H=\tfrac{1}{2}p^ip^i,
\]
which is gauge invariant and descends to a smooth function on the symplectic quotient of the phase space, where the momentum map is given by the Gauss constraint.

\bibliographystyle{unsrt}
\bibliography{biblio}
\end{document}